\newtheorem{proposition}{Proposition}
\newtheorem{remark}{Remark}
\newtheorem{definition}{Definition}
\newtheorem{theorem}{Theorem}
\newcommand{\D}{\mathrm{d}}
\begin{document}
	
	\title{Constructing Trinomial Models Based\\ on Cubature Method on Wiener Space: Applications to Pricing Financial Derivatives}
	
	
	\author{
		Hossein Nohrouzian
		\and
		Anatoliy Malyarenko
		\and
		Ying Ni
	}
	
	\date{}
	\maketitle
	
	

\begin{abstract}
		This contribution deals with an extension to our developed novel cubature methods of degrees 5 on Wiener space. In our previous studies, we have shown that the cubature formula is exact for all multiple Stratonovich integrals up to dimension equal to the degree. In fact, cubature method reduces solving a stochastic differential equation to solving a finite set of ordinary differential equations. Now, we apply the above methods to construct trinomial models and to price different financial derivatives. We will compare our numerical solutions with the Black’s and Black--Scholes models’ analytical solutions. The constructed model has practical usage in pricing American-style derivatives and can be extended to more sophisticated stochastic market models.\\
		\textbf{keywords:} Cubature method, Stratonovich integral, Wiener space, stochastic market model
\end{abstract}

\section{Introduction and outline of the paper}
\label{Introduction and outline of the paper}
In mathematical finance, it  is common to describe the random changes in risky asset prices by  stochastic differential equations (SDEs). SDEs can be re-written in their integral forms. However, it is not possible to calculate all stochastic integrals in closed form. Therefore,  proper numerical methods should be used to estimate the value of such stochastic integrals.

One of the most popular numerical method to estimate stochastic integrals is Monte Carlo method (estimate). In particular, according to~\cite{paperE}, cubature methods and consequently cubature formulae construct a probability measure with finite support on a finite-dimensional real linear space which approximates the standard Gaussian measure. A generalisation of this idea, when a finite-dimensional space is replaced with the Wiener space, can be used for constructing modern Monte Carlo estimates  (see~\cite{BayerTeichmann2008} for the exact sense of modern Monte Carlo estimate). The idea of cubature method on Wiener space, among others, was developed in~\cite{LyonsVictoir2004}. The extension of this idea were developed and studied in~\cite{paperA,paperC,paperD,paperE}.

Our objective is to use cubature method and a cubature formula of degree 5 on Wiener space to estimate the expected values of functionals defined on the solutions of SDEs. This means that we use an extension to our developed novel cubature methods of degrees 5 to estimate the (discounted) expected values of European call and put payoff functions defined on the solutions of Black--Scholes and Black's SDEs. This extension includes a construction of a recombining trinomial tree model. The underlying asset prices in Black--Scholes and Black's models are log-normally distributed and the price dynamics follow  geometric Brownian motion. Moreover, both models have closed form solutions to find the price of European call and put options. Availability of closed form solutions of these models provides us with an opportunity to investigate if the sequence of our  trinomial models converges to a geometric Brownian motion or not.
Also, we can compare our numerical results with analytical ones and consequently estimate the corresponding errors of our method.
We would like to emphasize that the constructed trinomial tree has practical usage and applications in pricing path-dependent and American-style options.

The outline of this paper is as follows. In Section~\ref{Cubature formula in Black--Scholes and Black's model}, we briefly look at the cubature method on Wiener space and at the applications of cubature formula in Black--Scholes and Black's models. Then, in Section~\ref{Constructing a trinomial model via cubature formula}, we construct a trinomial model based on cubature formula on Black--Scholes model. After that,  in Section~\ref{Convergence to geometric Brownian motion}, we study the convergence of the sequences of constructed trinomial model to a geometric Brownian motion. In Section~\ref{Martingale probability measure}, we will study the  conditions which makes the probability measure in our trinomial model a martingale measure, i.e., risk-neutral probability measure. Later, in Section~\ref{Extension of the results and examples}, we will extend our results for more  cases and give some concrete examples where we study the behaviour of  the constructed trinomial model in Black--Scholes and Black's model. Finally, we close this paper by a discussion section.

\section{Cubature formula in Black--Scholes and Black's model}
\label{Cubature formula in Black--Scholes and Black's model}
In this section, we briefly review how the cubature method on Wiener space can be used in the Black--Scholes and Black's models.
\subsection{Black--Scholes model via cubature formula}
\label{Black--Scholes model via cubature formula}
Given a filtered probability space $(\Omega,\mathfrak{F},\mathsf{P},(\mathfrak{F})_{t\geq 0})$, let $S(t)$ be  time-$t$   price of a (non-dividend-paying) risky asset, $r$ and $\sigma$ be drift (risk-free interest rate) and diffusion (volatility of asset price) coefficients and  $\{W(t)\}_{t\geq 0}$ be the standard one-dimensional Wiener process.
The dynamics of  risky asset prices  in the Black--Scholes model~\cite{BlackScholes1973,Merton1973}, under equivalent martingale probability measure  $\mathsf{Q}\sim\mathsf{P}$,  satisfies the following SDE (originally proposed by Samuelson~\cite{Samuelson1965})
\begin{equation}
	\label{Samuelson SDE}
	\D S(t)= rS(t)\D t + \sigma S(t)\D W(t), \qquad 0\leq t\leq T.
\end{equation}

The solution to the SDE~\eqref{Samuelson SDE} is an It\^{o} process. After applying the Stratonovich correction, the above equation can be written in its Stratonovich form (see~\cite{Oksendal2013})

\begin{equation}
	\label{stratonovich SDE}
	\D S(t) = (r-\frac{1}{2}\sigma^2)S(t)\D t+ \sigma S(t)\circ\D W(t).
\end{equation}

We note that, cubature formula is valid in time interval with length one. Applying the results of cubature method on Wiener space in Equation~\eqref{stratonovich SDE} yields to~\cite{paperC,paperE}
\begin{align*}
	\D S_k(t) & = (r-\frac{1}{2}\sigma^2) S_k(t)\D t+ \sigma S_k(t) \D\omega_k(t)t, \qquad 0\leq t\leq 1,
\end{align*}
where  $\omega_k,$ \; $(1\leq k\leq l)$ is the $k$th  possible trajectory, and $l,\; (l\in \mathbb{Z}^+)$ stands for the number of trajectories in the cubature formula of degree $M$.

Rearranging the  last equation and calculating the integral of both hand sides gives
\begin{align}
	\label{Generalized stratonovich}
	\hat{S}_k(t_{j})  = \hat{S}_k(t_{j-1})\exp\left\{(r-\frac{1}{2}\sigma^2) [t_{j}-t_{j-1}]+ \sigma  [\omega_k(t_{j})-\omega_k(t_{j-1})]\right\},
\end{align}
with  $j=1,\ldots, l$ and $0\leq t_j\leq 1$.
In a cubature formula of degree $M=5$, the number of trajectories is $l=3$ and one of the possible solutions becomes
\begin{equation}
	\label{generalized cubature of O5}
	\omega_k(t_{j}) = 3\theta_{k,j}(t_{j}-t_{j-1})+\omega_k(t_{j-1}), \quad  j=1,2,3, \quad\omega_k(0) = 0,
\end{equation}
where $0=t_0<t_1<t_2<t_3=1$, i.e., the trajectories start from time $0$ and stop at time $1$,  $j=1, 2, 3$,   $t_{j}-t_{j-1}=1/3$ and with weight $\lambda_k$ and coefficients $\theta_k$ summarized in Table~\ref{Solutions O5}.
\begin{table}[t!]
	\centering
		\begin{tabular}{ccccc}
			\toprule
			$k$ \quad&\quad $\lambda_k$ \quad&\quad $\theta_{k,1}=\theta_{k,3}$ \quad&\quad $\theta_{k,2}$ \quad&\quad $\theta_{k,3}=\theta_{k,1}$ \\
			\midrule
			1 & $1/6$ &\quad $(-2\sqrt{3}\mp\sqrt{6})/{6}$  \quad&\quad $(-\sqrt{3}\pm\sqrt{6})/{3}$ \quad&\quad  $(-2\sqrt{3}\mp\sqrt{6})/{6}$ \\
			2 & $2/3$ & $\pm\sqrt{6}/{6}$ & $\mp\sqrt{6}/{3}$	&  $\pm\sqrt{6}/{6}$\\
			3 & $1/6$ & $(2\sqrt{3}\pm\sqrt{6})/{6}$  	& $(\sqrt{3}\mp\sqrt{6})/{3}$ 	&  $(2\sqrt{3}\pm\sqrt{6})/{6}$  \\
			\bottomrule
		\end{tabular}
		\caption{Information for cubature formulae of degree 5.}
		\label{Solutions O5}
\end{table}
	
Using Equation~\eqref{generalized cubature of O5}, we calculate  $\omega_3(t_3) =-\omega_1(t_3)=\sqrt{3}$ and $\omega_2(t_3) = 0$. For simplicity, denote $\omega_k(t_3)$ by $\omega_k$ Now, let us ignore the intermediate partitions in $0=t_0<t_1<t_2<t_3=1$. This reduces Equation~\eqref{Generalized stratonovich} to
\begin{align}
		\label{simplified stratonovich}
		\hat{S}_k(1)  = \hat{S}_{k}(0)\exp\left\{(r-\frac{1}{2}\sigma^2) + \sigma  \omega_k\right\}.
\end{align}

Equation~\eqref{simplified stratonovich} works for the time interval of length one, i.e. $t\in[0,1]$.
Let $T$ be the time to maturity for an option. Right now, we can only consider  $T=1$. We would like  considering yearly  interest rate, yearly volatility and yearly time to maturity in a more flexible way. So, we modify the drift and diffusion terms in Equation \eqref{simplified stratonovich}. This modification yields to
\begin{align}
		\label{modified stratonovich}
		\hat{S}_k(T)  = \hat{S}_{k}(0)\exp\left\{(r-\frac{1}{2}\sigma^2)T + \sigma\sqrt{T}  \omega_k\right\}.
\end{align}

Now, it is easy to calculate $\hat{S}_k(T)$ for $k = 1,2,3$. Then, the price of an option will be equal to its discounted expected payoff. For example, the price   of an European call option ($\pi_c$)  and the price   of an European put option ($\pi_p$) with strike price $K$ will be given by
\begin{align*}
\pi_c &= {e^{-rT}}\sum_{k=1}^3 \lambda_k\max\left(\hat{S}_k(T)-K,0\right),\\
\pi_p &= {e^{-rT}}\sum_{k=1}^3 \lambda_k\max\left(K-\hat{S}_k(T),0\right).
\end{align*}

Equation~\eqref{modified stratonovich}  works better for very small time to maturity. Later in this paper, we will try to improve the performance of our model by extending it into a trinomial tree model.
\subsection{Black's model via cubature formula}
In the Black's model~\cite{Black1976}, the dynamics of  risky forward rate prices  ${F(t)}$  satisfies
\begin{equation*}
	\D F(t)=  \sigma F(t)\D W(t),\quad 0\leq t\leq T.
\end{equation*}

The Stratonovich form of the above SDE becomes
\begin{equation*}
	\D F(t) =-\frac{1}{2}\sigma^2F(t)\D t+ \sigma F(t)\circ\D W(t).
\end{equation*}

Following the same procedure  as  in Section~\ref{Black--Scholes model via cubature formula}, we  get
\begin{align}
	\label{modified stratonovich for forward rate}
	\hat{F}_k(T)  = \hat{F}_{k}(0)\exp\left\{-\frac{1}{2}\sigma^2T+\sigma\sqrt{T}\omega_k \right\}.
\end{align}

\section{Constructing a trinomial tree via cubature formula}
\label{Constructing a trinomial model via cubature formula}
In this section, we develop, modify and revise the idea of constructing a trinomial model via cubature formula on Wiener space presented in~\cite{paperC}.
The idea of constructing an $n$-step trinomial tree is to divide the time interval $[0,T]$ by $n$ steps.
In each step, the price can
go up by an amount of $u_0$ with probability $p_u$,
go in middle by an amount of $m_0$ with probability $p_m$ and
go down by an amount $d_0$ with probability $p_d$.
Also, $0\leq p_u,p_m,p_d\leq 1$, $p_u+p_m+p_d=1$ and $u_0>m_0>d_0$.  In other words, we create  more trajectories (paths) for the underlying process in discrete time in order to get more accurate possible prices.
If a trinomial model is  recombining, then the number of nodes in the  constructed recombining trinomial tree is $2n+1$. Using  trinomial expansion, we can find all  $(n+1)(n+2)/2$ trinomial coefficients which represent the number of possible paths to reach each node.
Note that the sum of all trinomial coefficients is equal to the number of all possible paths in the constructed trinomial tree, that is, to $3^n$.
Moreover, we can simply calculate the corresponding probabilities and prices at each step of the tree and for each node. Figure~\ref{fig} illustrates the idea, where the dashed-arrows represent a recombining binomial tree (e.g., see CRR binomial model in~\cite{CRR1979}). We observe that for example in node $(3,4)$, $S_0u_0m_0^2=S_0u_0^2d_0$ and in node $(3,3)$ $S_0m_0^3=S_0u_0m_0d_0$.

The constructed tree can (among other possible applications) be used for pricing European, American-style  and path-dependent derivative options.
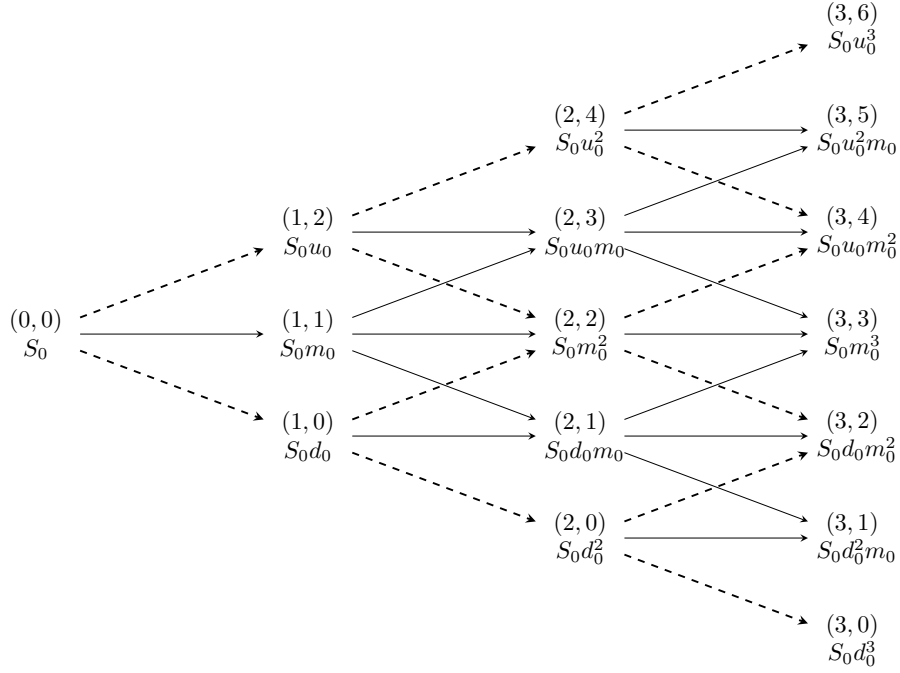
\begin{figure}[t!]
	\centering
	\tikzstyle{bag} = [text width=3em, text centered]
	\tikzstyle{end} = []
	\resizebox{\linewidth}{!}{%
	\begin{tikzpicture}[sloped, >=stealth]
		\node (a) at ( 0,0) [bag] {$(0,0)$\\$S_0$};
		\node (b) at ( 4,1.5) [bag] {$(1,2)$\\$S_0u_0$};
		\node (c) at ( 4,0) [bag] {$(1,1)$\\$S_0m_0$};
		\node (d) at ( 4,-1.5) [bag] {$(1,0)$\\$S_0d_0$};
		\node (e) at ( 8,3) [bag] {$(2,4)$\\$S_0u_0^2$};
		\node (f) at ( 8,1.5) [bag] {$(2,3)$\\$S_0u_0m_0$};
		\node (g) at ( 8,0) [bag] {$(2,2)$\\$S_0m_0^2$};
		\node (h) at ( 8,-1.5) [bag]{$(2,1)$\\$S_0d_0m_0$};
		\node (i) at ( 8,-3) [bag] {$(2,0)$\\$S_0d_0^2$};
		\node (j) at ( 12,4.5) [bag] {$(3,6)$\\$S_0u_0^3$};
		\node (k) at ( 12,3) [bag] {$(3,5)$\\$S_0u_0^2m_0$};
		\node (l) at ( 12,1.5) [bag] {$(3,4)$\\$S_0u_0m_0^2$};
		\node (m) at ( 12,0) [bag] {$(3,3)$\\$S_0m_0^3$};
		\node (n) at ( 12,-1.5) [bag] {$(3,2)$\\$S_0d_0m_0^2$};
		\node (o) at ( 12,-3) [bag]{$(3,1)$\\$S_0d_0^2m_0$};
		\node (p) at ( 12,-4.5) [bag] {$(3,0)$\\$S_0d_0^3$};
		\draw [dashed,->,thick] (a) to node [     ] {} (b);
		\draw [->] (a) to node [     ] {} (c);
		\draw [dashed,->,thick] (a) to node [     ] {} (d);
		\draw [dashed,->,thick] (b) to node [     ] {} (e);
		\draw [->] (b) to node [     ] {} (f);
		\draw [dashed,->,thick] (b) to node [     ] {} (g);
		\draw [->] (c) to node [     ] {} (f);
		\draw [->] (c) to node [     ] {} (g);
		\draw [->] (c) to node [     ] {} (h);
		\draw [dashed,->,thick] (d) to node [     ] {} (g);
		\draw [->] (d) to node [     ] {} (h);
		\draw [dashed,->,thick] (d) to node [     ] {} (i);
		\draw [dashed,->,thick] (e) to node [     ] {} (j);
		\draw [->] (e) to node [     ] {} (k);
		\draw [dashed,->,thick] (e) to node [     ] {} (l);
		\draw [->] (f) to node [     ] {} (k);
		\draw [->] (f) to node [     ] {} (l);
		\draw [->] (f) to node [     ] {} (m);
		\draw [dashed,->,thick] (g) to node [     ] {} (l);
		\draw [->] (g) to node [     ] {} (m);
		\draw [dashed,->,thick] (g) to node [     ] {} (n);
		\draw [->] (h) to node [     ] {} (m);
		\draw [->] (h) to node [     ] {} (n);
		\draw [->] (h) to node [     ] {} (o);
		\draw [dashed,->,thick] (i) to node [     ] {} (n);
		\draw [->] (i) to node [     ] {} (o);
		\draw [dashed,->,thick] (i) to node [     ] {} (p);
		
	\end{tikzpicture}
	}
	\caption{A three-step trinomial tree.}
	\label{fig}
\end{figure}
\subsection*{A trinomial tree approximation for the Black--Scholes model}
Let $h=T/n$, substituting $\omega_3 =-\omega_1 =\sqrt{3}$ and $\omega_2 = 0$  in Equation~\eqref{modified stratonovich} (from now on we type $S$ instead of $\hat{S}$), we introduce up, middle and down factors by
\begin{equation}\label{eq:umd}
	\begin{cases}
		u =\ln(u_0)=\ln\dfrac{S_3(h)}{S(0)}=(r-\dfrac{1}{2}\sigma^2)h + \sigma\sqrt{3h},\\
		m = \ln(m_0)=\ln\dfrac{S_2(h)}{S(0)}=(r-\dfrac{1}{2}\sigma^2)h,\\
		d = \ln(d_0)=\ln\dfrac{S_1(h)}{S(0)}=(r-\dfrac{1}{2}\sigma^2)h - \sigma\sqrt{3h}.
	\end{cases}
\end{equation}
Observe that the factors $u$, $m$, and $d$ depend on $n$ through $h$. To simplify notation, we omit this dependence.

\begin{proposition}
	The above trinomial construction produces a recombining trinomial tree.
\end{proposition}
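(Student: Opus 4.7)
The plan is to reduce the recombining property to a single algebraic identity between the jump factors, namely $u_0 d_0 = m_0^2$, or equivalently $u + d = 2m$ in log-coordinates. Once this identity is in hand, every node at step $n$ depends only on the net number of up-moves minus down-moves, not on the ordering or on the specific split into middle-moves, which is precisely what ``recombining'' means.

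First, I would verify $u + d = 2m$ directly from the definitions in~\eqref{eq:umd}: adding the first and third lines, the $\pm\sigma\sqrt{3h}$ terms cancel and the result is $2(r-\tfrac{1}{2}\sigma^2)h$, which equals $2m$. This immediately yields the pairwise node identifications visible in Figure~\ref{fig}, e.g.\ $S_0 u_0 d_0 = S_0 m_0^2$ and, multiplying by further factors, $S_0 u_0^2 d_0 = S_0 u_0 m_0^2$ and $S_0 u_0 d_0^2 = S_0 m_0 d_0^2$, matching the labels of nodes $(2,2)$, $(3,4)$, and $(3,2)$ respectively.

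Next, I would generalize this to an arbitrary path of length $n$. A path is encoded by a triple $(n_u,n_m,n_d)$ with $n_u+n_m+n_d=n$, and the log-price change along such a path is $n_u u + n_m m + n_d d$. Substituting $n_m = n - n_u - n_d$ and using $u-m = \sigma\sqrt{3h}$, $d-m = -\sigma\sqrt{3h}$, I would rewrite this as
\begin{equation*}
	n_u u + n_m m + n_d d = n m + (n_u - n_d)\,\sigma\sqrt{3h}.
\end{equation*}
Thus the terminal log-price after $n$ steps depends on the path only through the single integer $j := n_u - n_d \in \{-n,-n+1,\ldots,n\}$. Any two paths with the same value of $j$ land on the same node, which is the recombining property.

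Finally, I would observe that as $(n_u,n_d)$ range over nonnegative integers with $n_u+n_d\leq n$, the difference $j=n_u-n_d$ takes exactly the $2n+1$ values $-n,\ldots,n$, so the step-$n$ layer of the tree contains exactly $2n+1$ distinct nodes, matching the count asserted in the text. I do not foresee a genuine obstacle here; the entire content is the identity $u+d=2m$, which is built into the cubature nodes $\omega_1,\omega_2,\omega_3$ through the symmetric choice $\omega_3=-\omega_1$, $\omega_2=0$. The only thing to watch is that the identity is a property of the \emph{log}-prices, so the verification must be done additively in $u,m,d$ rather than multiplicatively in $u_0,m_0,d_0$.
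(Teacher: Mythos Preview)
Your approach is essentially the same as the paper's: both hinge on the single identity $u+d=2m$ (equivalently $u_0d_0=m_0^2$), which you and the paper each verify directly from~\eqref{eq:umd}. The paper stops after stating that identity, whereas you go on to show explicitly that the terminal log-price depends only on $n_u-n_d$ and to count the $2n+1$ nodes; your final caution that the verification ``must be done additively'' is unnecessary, since the paper's multiplicative check $u_0d_0=m_0^2$ is equivalent and equally valid.
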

\begin{proof}
	We simply calculate
	\begin{align*}
		u_0d_0=\exp \left\{2(r-\frac{1}{2}\sigma^2)h\right\} = m_0m_0=m_0^2.
	\end{align*}
Since $u_0,m_0$ and $d_0$ are positive, $m_0 =\sqrt{u_0d_0}$. Equivalently, $m=(u+d)/2$.
\end{proof}

As an example, assume that the stock price $S_0=\$100$, time to maturity $T=0.5$ of a year, strike price $K=\$120$, yearly interest rate $r=2.5\%$, yearly volatility  $\sigma=25\%$ and the number of steps in our trinomial tree $n=252$.

Now, given the above parameters, we write a script in MATLAB\textsuperscript{\textregistered},  where  we use the Black--Scholes formulae to calculate analytical prices, and apply the constructed trinomial tree model to estimate numerical prices of European call and put options.
The option prices are summarized in Table~\ref{tab:example}. Moreover, the behavior of trinomial tree prices (for call option) is depicted in Figure~\ref{fig:example}.
The absolute value of difference between  Black--Scholes and trinomial prices for call option is 0.0020704969 and for put is 0.00207049733.
\begin{figure}[t!]
	\centering
	\includegraphics[width=.8\linewidth]{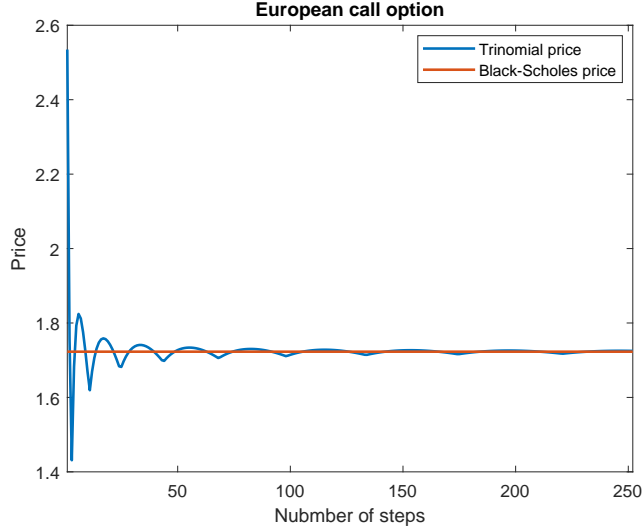}
	\caption{Black--Scholes and trinomial prices for European call  option.}
	\label{fig:example}
\end{figure}

\begin{table}[t!]
	\centering
	\resizebox{\linewidth}{!}{%
		\begin{tabular}{cccc}
			\toprule
			 \;$\pi_c$(\$) Black--Scholes \; & \;$\pi_c$(\$) Trinomial \; & \; $\pi_p$(\$) Black--Scholes \;&\; $\pi_p$(\$) Trinomial \; \\
			\midrule
			1.722901670   & 1.724972167  & 20.23223773 & 20.234308227 \\		
			\bottomrule
		\end{tabular}
	}
		\caption{Black--Scholes model example.}
		\label{tab:example}
\end{table}
\section{Convergence to geometric Brownian motion}
\label{Convergence to geometric Brownian motion}
As Figure~\ref{fig:example} suggests, our (numerical) trinomial price may converge to the  (analytical) Black--Scholes price. Now, we will consider a more general case and show that the sequence of our trinomial model (weakly) converges to a geometric Brownian motion. We first prove the convergence in distribution and then we prove the sequence of measures is  tight. Let us start by reviewing the necessary definitions given in~\cite{Billingsley}.

Let $(S,d)$ be a metric space. For the purposes of this paper it is enough to consider the case of $S=C[0,T]$ with the distance
\begin{align*}
d(a,b)=\max\limits_{t\in[0,T]}|a(t)-b(t)|.
\end{align*}

Let $\mathfrak{S}$ be the $\sigma$-field of Borel sets in $S$. For a probability space  $(\Omega,\mathfrak{F},\mathsf{P})$, consider a measurable map $X\colon\Omega\to S$. By this definition, for a point $w\in\Omega$, the image $X(w)$ is a continuous function on $[0,T]$. That is, $X(t,w)=X(w)(t)$ is a stochastic process with continuous sample paths.	

\begin{definition}
The \emph{distribution} of $X$ is the probability measure $P$ on the $\sigma$-field $\mathfrak{S}$ given by
\[
P(A)=\mathsf{P}(X^{-1}(A)),\qquad A\in\mathfrak{S}.
\]
\end{definition}

\begin{remark}
In other words, we describe the one-to-one correspondence between the family of stochastic processes $X$ with continuous trajectories and the family of probability measures $P$ on $C[0,T]$.	
\end{remark}

\begin{definition}

A sequence $\{\,X_n\colon n\geq 1\,\}$ of stochastic processes \emph{converges in distribution} to a stochastic process $X$ if the sequence $\{\,P_n\colon n\geq 1\,\}$ of their distributions weakly converges to the distribution $P$ of the stochastic process $X$, that is, for any continuous function $v\colon S\to\mathbb{R}$ we have
\[
\lim_{n\to\infty}\int_Sv(f)\,\mathrm{d}P_n(f)=\int_Sv(f)\,\mathrm{d}P(f).
\]
\end{definition}

Let $k$ be a positive integer, and let $t_1$, \dots, $t_k$ be arbitrary distinct points in $[0,T]$.

\begin{definition}
The \emph{natural projection} is the map $\pi_{t_1,\dots,t_k}\colon S\to\mathbb{R}^k$ given by
\[
\pi_{t_1,\dots,t_k}(f)=(f(t_1),\dots,f(t_k))^{\top}.
\]
\end{definition}

\begin{definition}
The \emph{finite-dimensional distributions} of a stochastic process $X$ are the measures $P\pi^{-1}_{t_1,\dots,t_k}$ on the Borel $\sigma$-field $\mathfrak{R}^k$ of the space $\mathbb{R}^k$ given by
\[
P\pi^{-1}_{t_1,\dots,t_k}(A)=P(\pi^{-1}_{t_1,\dots,t_k}(A)),\qquad A\in\mathfrak{R}^k.
\]
\end{definition}

\begin{definition}
A family $\Pi$ of probability measures on $\mathfrak{S}$ is called \emph{relatively compact} if every sequence of elements of $\Pi$ contains a weakly converging subsequence.
\end{definition}

\begin{theorem}
If the sequence $\{\,P_n\colon n\geq 1\,\}$ of the distributions of stochastic processes $\{\,X_n\colon n\geq 1\,\}$ is relatively compact and the finite-dimensional distributions of $X_n$ converge weakly to those of a stochastic process $X$, then $\{\,X_n\colon n\geq 1\,\}$ converges in distribution to $X$.
\end{theorem}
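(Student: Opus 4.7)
The plan is to use the standard subsequence principle in conjunction with a uniqueness argument for measures on $C[0,T]$. The starting observation is that every natural projection $\pi_{t_1,\dots,t_k}\colon C[0,T]\to\mathbb{R}^k$ is continuous with respect to the uniform metric $d$. Consequently, whenever a subsequence $P_{n'}$ converges weakly to some probability measure $Q$ on $\mathfrak{S}$, the continuous mapping theorem forces $P_{n'}\pi^{-1}_{t_1,\dots,t_k}\Rightarrow Q\pi^{-1}_{t_1,\dots,t_k}$ for every choice of $t_1,\dots,t_k$.

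First I would invoke relative compactness: take an arbitrary subsequence $\{P_{n_k}\}\subset\{P_n\}$. By assumption, it contains a further subsequence $\{P_{n_{k_j}}\}$ that converges weakly to some limit $Q$. Second, combining the observation above with the hypothesis that the finite-dimensional distributions of $X_n$ converge to those of $X$, one obtains that for every finite collection $t_1,\dots,t_k\in[0,T]$,
\[
Q\pi^{-1}_{t_1,\dots,t_k}=\lim_{j\to\infty}P_{n_{k_j}}\pi^{-1}_{t_1,\dots,t_k}=P\pi^{-1}_{t_1,\dots,t_k}.
\]
Third, I would appeal to the fact that a Borel probability measure on $C[0,T]$ is uniquely determined by its finite-dimensional distributions. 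This is standard and follows from noting that the cylinder sets $\pi^{-1}_{t_1,\dots,t_k}(A)$ form a $\pi$-system generating $\mathfrak{S}$, so Dynkin's $\pi$--$\lambda$ theorem gives $Q=P$. Thus every weakly convergent subsequence of $\{P_n\}$ has the same limit $P$.

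Finally, I would close the argument with a routine topological lemma: if every subsequence of $\{P_n\}$ admits a further subsequence converging weakly to the same limit $P$, then the whole sequence $P_n$ converges weakly to $P$. (If not, some bounded continuous functional $v$ would give a subsequence with $\int v\,\mathrm{d}P_{n'}$ bounded away from $\int v\,\mathrm{d}P$, contradicting relative compactness applied to that subsequence.) Stitching these pieces together yields $P_n\Rightarrow P$, i.e., $X_n$ converges in distribution to $X$.

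The main obstacle I anticipate is the uniqueness step, namely that finite-dimensional distributions determine the measure on $\mathfrak{S}$. In finite dimensions this is trivial, but on $C[0,T]$ it requires the separability of $C[0,T]$ under $d$ (so that Borel sets coincide with the $\sigma$-algebra generated by cylinder sets) and a $\pi$--$\lambda$ argument; everything else is just bookkeeping via the continuous mapping theorem and the subsequence characterisation of convergence.
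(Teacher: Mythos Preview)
Your argument is correct and is exactly the standard route (relative compactness $\Rightarrow$ sub-subsequence limits exist; continuous mapping $\Rightarrow$ their finite-dimensional distributions agree with $P$; uniqueness on $C[0,T]$ $\Rightarrow$ every such limit is $P$; subsequence principle $\Rightarrow$ $P_n\Rightarrow P$). Note, however, that the paper does not supply its own proof of this theorem: it is stated there as a known result taken from Billingsley and used as a black box, with the subsequent sections devoted to verifying its two hypotheses for the trinomial tree. So there is nothing in the paper to compare your proof against; you have simply filled in the classical argument that the paper omits.
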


Thus, proof of convergence in distribution of the sequence of trinomial trees to the geometric Brownian motion is naturally dividing into two parts, which will be given in the next two subsections.

\subsection{Convergence of finite-dimensional distribution}
Let us study the convergence  of our model in the following 3 steps.
\paragraph{Step 1. (Preliminaries)} Consider Equation~\eqref{eq:umd} and  let $0=t_0<t_1<\cdots<t_n$, where $t_n = T$. Then, the price at each step of the tree can be found using
\begin{equation}
	\label{u1m1d1}
	S(t_i) =
	\begin{cases}
		S(t_{i-1})e^{u}\qquad \text{ with probability}\quad p_{u},\\
		S(t_{i-1})e^{m}\qquad \text{with probability} \quad p_{m},\qquad  i=1,\ldots,n\\
		S(t_{i-1})e^{d}\qquad \text{ with probability} \quad p_{d}.
	\end{cases}
\end{equation}

Denote the number of times that the price goes up, up to time $t_n$, i.e., step $n$, by $n_{u}(t_n)$, down by $n_{d}(t_n)$ and middle by $n_{m}(t_n) = n - n_{u}(t_n)-n_{d}(t_n)$, then the set of possible prices can be expressed by
\begin{align*}
	S(t_n)\in\left\{S_0\exp\left(un_{u}(t_n)+dn_{d}(t_n)+m[n-n_{u}(t_n)-n_{d}(t_n)]\right)\right\}.
\end{align*}

If we substitute $m=(u+d)/2$, the above set reduces to
\begin{align*}
	S(t_n)\in\left\{S_0\exp\left([n_{u}(t_n)-n_{d}(t_n)](u-d)/2+n(u+d)/2\right)\right\}.
\end{align*}

Put $\Omega=C[0,T]$, the space of paths, and $\mathfrak{S}$ as the $\sigma$-field of Borel sets.  Let $p=(p_u,p_m,p_d)$ such that $0\leq p_u,p_m,p_d\leq 1$ and $p_u+p_m+p_d=1$ be given. Now, we define the probability measure $\mathsf{P}_n$ on $\mathfrak{S}$ supported by a finite subset of sample space $C[0,T]$ with $3^n$ elements.

\paragraph{Step 2. (Description of the measure $\mathsf{P}_n$)}
Consider the set of $3^n$ ``words'' $w=w_1\ldots w_n$ consisting of $n$ letters, where each letter is either $u$ or $m$ or $d$. Let
\begin{itemize}
	\item $\{\,\xi_k\colon k\geq 1\,\}$ be the sequence of independent and identically distributed (IDD) random variables with $\mathsf{P}\{\xi_1=\ln(w_1)\}=p_{w_1}$,
	\item $\Xi_0=0$ and $\Xi_k=\xi_1+\cdots+\xi_k$, $k\geq 1$,
	\item $f_{w}\in C[0,T]$ be the function that takes the following values:
	\begin{enumerate}		
		\item	\[
		f_{w}(0)=\ln(S_0),
		\]
		\item 	\[
		f_{w}\left(\dfrac{(k+1)T}{n}\right)=f_{w}\left(\dfrac{kT}{n}\right)+\xi_{k+1},\qquad 0\leq k\leq n-1,
		\]
		\item	 the value in an arbitrary ``intermediate'' point, say $t$, is given by linear interpolation, that is
		\[
		f_{w}(t)=\ln(S_0)+\Xi_{\lfloor nt/T\rfloor}+(nt/T-\lfloor nt/T\rfloor)\xi_{\lfloor nt/T\rfloor+1},\qquad 0\leq t\leq T.
		\]	
	\end{enumerate}
	\item $n_{u}(w)$ represents the number of letters $u$,
	\item $n_{m}(w)$ represents the number of letters $m$,
	\item $n_{d}(w)$ represents the number of letters $d$.
\end{itemize}

Now, we define  the probability measure $\mathsf{P}_n$ on the Borel $\sigma$-field supported by the subset of sample space $C[0,T]$ with $3^n$ atoms $f_{w}(t)$ by

\[
\mathsf{P}_n(f_{w}(t))=p^{n_u(w)}_{u}p^{n_m(w)}_{m}p^{n_d(w)}_{d}.
\]

Finally, let $\mathsf{P}$ be the measure that corresponds to the geometric Brownian motion
\[
S(t)=S_0\exp\left\{(r-\frac{1}{2}\sigma^2)t+\sigma W(t)\right\},\qquad 0\leq t\leq T,
\]
that is, $\mathsf{P}(A)=\mathsf{P}\{S(t)\in A\}$, $\;A\in\mathfrak{S}$.

\paragraph{Step 3. (Proof of convergence)}
Define the sequence of random variables $\{X_i\}_{i=1}^n$ on the probability space $(\Omega,\mathfrak{F},\mathsf{P}_n,(\mathfrak{F}_t)_{t\geq 0})$ and on $w=(\xi_1,\ldots,\xi_n)\in\Omega$  as $X(w)=\xi_i$. Following~\cite{Pascucci2011}, we define
\begin{align*}
	X_i(w) =
	\begin{cases}
		\;\;\; 1 \qquad \text{if} \quad \xi_i=u,\\
		\;\;\; 0 \qquad \text{if} \quad \xi_i=m,\\
		 -1\qquad \text{if}\quad \xi_i=d.		
	\end{cases}
\end{align*}
Note that, the sequence $\{X_i\}_{i=1}^{n}$ is a sequence of independent and identically distributed (IID) random variables.

Let  $S^n(t)$ be the stochastic process that corresponds to the measure $\mathsf{P}_n$. By the constructions in ``\emph{Step 2.}'', the values of the process $S^n(t_i)$ (for simplicity denote it by $S(t_i)$) in our trinomial tree can be expressed by re-writing
Equation~\eqref{u1m1d1}  as the following trinomial stock price
\begin{align}
	\label{gen1}
	S(t_i) = S(t_{i-1})\exp\left\{\frac{(u+d)}{2}+\frac{(u-d)}{2}X_i\right\},\qquad i=1,\ldots,n.
\end{align}

We observe that $\mathsf{E}[X_i]=p_{u}(1)+p_{m}(0)+p_{d}(-1)=p_{u}-p_{d}$
and $ \mathsf{E}[X_i^2]=p_{u}(1)^2+p_{m}(0)^2+p_{d}(-1)^2=p_{u}+p_{d}$. Using the equation above, we obtain following mean
\begin{align}
	\label{mu}
 \mathsf{E}\left[\ln\frac{S(t_i)}{S(t_{i-1})}\right]&=\mathsf{E}\left[\frac{(u+d)}{2}+\frac{(u-d)}{2}X_i\right]=\frac{(u+d)}{2}+\frac{(u-d)}{2}\mathsf{E}[X_i],\nonumber\\
&=	\frac{1}{2}[u+d+(u-d)(p_{u}-p_{d})],
\end{align}
and variance
\begin{align*}
	 \text{Var}\left[\ln\frac{S(t_i)}{S(t_{i-1})}\right]&=\mathsf{E}\left[\left(\ln\frac{S(t_i)}{S(t_{i-1})}\right)^2\right]-\left(\mathsf{E}\left[\ln\frac{S(t_i)}{S(t_{i-1})}\right]\right)^2,
\end{align*}
where for the first term on the right hand side of the above equation, we have
\begin{align*}
	\mathsf{E}\left[.^2\right]&=\dfrac{1}{4}\mathsf{E}\left[(u+d)^2+2(u+d)(u-d)X_i+(u-d)^2X_i^2\right]\\
	&=\dfrac{1}{4}\left((u+d)^2+2(u+d)(u-d)\mathsf{E}\left[X_i\right]+(u-d)^2\mathsf{E}\left[X_i^2\right]\right)\\
	&=\dfrac{1}{4}\left((u+d)^2+2(u+d)(u-d)(p_{u}-p_{d})+(u-d)^2(p_{u}+p_{d})\right),
\end{align*}
and for the second term
\begin{align*}
	\left(\mathsf{E}\left[.\right]\right)^2&=\dfrac{1}{4}[(u+d)^2+2(u+d)(u-d)(p_{u}-p_{d})+(u-d)^2(p_{u}-p_{d})^2.
\end{align*}
Subtracting  second term from first one gives
\begin{align*}
	\mathsf{E}\left[.^2\right]-\left(\mathsf{E}\left[.\right]\right)^2&=\dfrac{1}{4}\left[(u-d)^2(p_{u}+p_{d})-(u-d)^2(p_{u}-p_{d})^2\right]\\
	&=\dfrac{1}{4}[(p_{u}+p_{d})-(p_{u}-p_{d})^2](u-d)^2.
\end{align*}
Thus,
\begin{align}
	\label{sigma}
	\text{Var}\left[\ln\frac{S(t_i)}{S(t_{i-1})}\right]=\frac{1}{4}[p_{u}+p_{d}-(p_{u}-p_{d})^2](u-d)^2.
\end{align}

Substituting the values  $\lambda_1=p_{u}=p_{d}=\lambda_3=1/6$, $u$ and $d$, confirms the construction of our trinomial tree. That is,
\begin{align*}
	\mathsf{E}\left[\ln\frac{S(t_i)}{S(t_{i-1})}\right]=\frac{1}{2}[u+d+(p_{u}-p_{d})(u-d)]=\frac{1}{2}(\mu h+\mu h+0\cdot (u-d))=\mu h,
\end{align*}
where $\mu=(r-\frac{1}{2}\sigma^2)$. Moreover,
\begin{align*}
	\text{Var}\left[\ln\frac{S(t_i)}{S(t_{i-1})}\right]=\frac{1}{4}[p_{u}+p_{d}-(p_{u}-p_{d})^2](u-d)^2=\frac{1}{4}\frac{2}{6}(2\sigma\sqrt{3h})^2= \sigma^2h.
\end{align*}

In the next step, we would like to prove that the sequence of finite-dimensional distributions of our trinomial trees converges to those of the geometric Brownian motion. In one dimensional case, we prove $S(t_n)$ in our trinomial model converges in distribution to $S(T)$ in geometric Brownian motion as $n\to\infty$. First, we
iterate Equation~\eqref{gen1}. This yields to
\begin{align}
	\label{eq:iterated}
	S(t_n) = S_0\exp\left\{n\frac{(u+d)}{2}+\frac{(u-d)}{2}\sum_{i=1}^nX_i\right\}.
\end{align}

Note that for $i=1,\ldots,n$, the random variables $\ln\dfrac{S(t_{i})}{S(t_{i-1})}$ are IID random variables. Now, we use Equation~\eqref{gen1} and define $Z_n$ as the standardized random variable of the sum, $\sum\limits_{i=1}^n\ln\dfrac{S(t_{i})}{S(t_{i-1})}\left(=\ln\dfrac{S(t_{n})}{S_0}\right)$. That is,
\[
Z_n\coloneqq\dfrac{\sum\limits_{i=1}^n\ln\dfrac{S(t_{i})}{S(t_{i-1})}-n\mathsf{E}\left[\ln\dfrac{S(t_{i})}{S(t_{i-1})}\right]}{\left(n\text{Var}\left[\ln\dfrac{S(t_{i})}{S(t_{i-1})}\right]\right)^{1/2}}.
\]
Using the  obtained mean and variance values in Equations~\eqref{mu} and~\eqref{sigma}
\begin{gather*}
	Z_n
	=\dfrac{\sum\limits_{i=1}^n\ln\dfrac{S(t_{i})}{S(t_{i-1})}-\dfrac{n}{2}[u+d+(p_{u}-p_{d})(u-d)]}{\sqrt{\dfrac{n}{4}[p_{u}+p_{d}-(p_{u}-p_{d})^2](u-d)^2}}\\
	=\dfrac{\dfrac{(u-d)}{2}\sum_{i=1}^nX_i+n\dfrac{u+d}{2}-n\dfrac{u+d}{2}+n\dfrac{(p_{u}-p_{d})(u-d)]}{2}}{\sqrt{n}\sqrt{\dfrac{1}{4}[p_{u}+p_{d}-(p_{u}-p_{d})^2](u-d)^2}},
\end{gather*}
multiplying numerator and denominator by $2/(u-d)$,
\begin{align*}	
	Z_n=\dfrac{\sum_{i=1}^nX_i-n(p_{u}-p_{d})}{\sqrt{n}\sqrt{p_{u}+p_{d}-(p_{u}-p_{d})^2}}
	=\dfrac{X_1+\ldots+X_n-n(p_{u}-p_{d})}{\sqrt{n}\sqrt{p_{u}+p_{d}-(p_{u}-p_{d})^2}},
\end{align*}
where for all $X_i$ and $i=1,\ldots,n$
\begin{align*}
	\mathsf{E}[X_i]&=p_{u}(1)+p_{m}(0)+p_{d}(-1)=p_{u}-p_{d}=\mu_{X},\\
	\text{Var}[X_i]&=\mathsf{E}[X_i^2]-(\mathsf{E}[X_i])^2=p_{u}(1)^2+p_{m}(0)^2+p_{d}(-1)^2-(p_{u}-p_{d})^2\\
	&=p_{u}+p_{d}-(p_{u}-p_{d})^2=\sigma^2_{X}.
\end{align*}
Thus,
\begin{align*}
	Z_n=\dfrac{X_1+\ldots+X_n-n\mu_{X}}{\sigma_{X}\sqrt{n}},
\end{align*}
and using central limit theorem  (see~\cite{Kijima2013}), we have
\begin{align*}
	\lim_{n\to\infty}\mathsf{P}\{Z_n\leq z\}=\mathcal{N}(z),\qquad z\in \mathbb{R},
\end{align*}
where $\mathcal{N}(z)$ is the standard normal distribution function.

Doing a little algebra, we calculate the following terms in Equation~\eqref{eq:iterated}
\begin{align}
	\label{S(T)_GBM}
	n\left(\frac{u+d}{2}\right) &= (r-\frac{1}{2}\sigma^2)nh=(r-\frac{1}{2}\sigma^2)T,\nonumber \\
	\frac{u-d}{2}\sum_{i=1}^nX_i &= \frac{u-d}{2}\left[n(p_u-p_d)+Z_n\sqrt{n}\sqrt{p_{u}+p_{d}-(p_{u}-p_{d})^2}\right]\nonumber\\
	& = {(\sigma\sqrt{3h})}(Z_n\sqrt{n}\sqrt{{1}/{3}})]=\sigma\sqrt{nh}Z_n=\sigma\sqrt{T}Z_n,
\end{align}
where, $p_u=p_d=1/6$, $T=t_n=nh$, and $u,d$ are given in Equation~\eqref{eq:umd}. Substituting the above values into Equation~\eqref{eq:iterated} (for an arbitrary $T$), we get
\begin{align*}
	S(T)=S_0\exp\left\{(r-\frac{1}{2}\sigma^2)T+\sigma\sqrt{T}Z_n\right\}.
\end{align*}

As the result, we showed that the one-dimensional distributions of our trinomial tree model  converge (in distribution) to those of the geometric Brownian motion as $n\to \infty$. The generalization to $k$-dimensional distributions is straightforward and uses the fact that the stochastic processes $X_n$ and $X$ have independent increments, see \cite{Billingsley} for details.
It remains to prove that the corresponding sequence of measures is relatively compact.
\subsection{Relative compactness}
To prove the  relative compactness of the sequence of measures, we firstly denote the stock price, up, middle and down factors as  functions of $n$. That is, $S^n,u_n,m_n$ and $d_n$. Secondly, we use the \emph{modulus of continuity}'s definition given in~\cite[Equation~7.1]{Billingsley} and Theorem~\cite[Theorem~7.5]{Billingsley}. 

\begin{definition}[Modulus of continuity]
\label{Def:1}
The \emph{modulus of continuity} for (an arbitrary) function $S^n(\cdot)$ on $[0,T]$	is defined by
\begin{align*}
	g(S^n,\delta)=\sup\limits_{\lvert z-z_0\rvert\leq \delta}\lvert S^n(z)-S^n(z_0)\rvert.
\end{align*}
\end{definition}

\begin{theorem}[Tightness and compactness in $C$]
\label{Thm:1}
Let $S,S^1,S^2,\ldots$ be random functions. If for all $t_1,\ldots,t_k$
\begin{align*}
(\mathcal{C}_1)\qquad	(S^n_{t_1},\ldots,S^n_{t_k})\Rightarrow_n (S_{t_1},\ldots,S_{t_k}),
\end{align*}
holds and if for each positive $\epsilon$
\begin{align*}
(\mathcal{C}_2)\qquad	\lim_{\delta\downarrow 0}\lim_{n\to\infty}\sup \mathsf{P}\left\{g(S^n,\delta)\geq \epsilon\right\}=0,
\end{align*}
holds, then $S^n \Rightarrow_n S$.
\end{theorem}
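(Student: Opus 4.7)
The plan is to follow the classical route through Prokhorov's theorem: establish tightness of the sequence $\{\mathsf{P}_n\}$ of laws on $C[0,T]$ using condition $(\mathcal{C}_2)$ together with the one-dimensional marginal tightness implied by $(\mathcal{C}_1)$, deduce relative compactness, and finally use uniqueness of finite-dimensional distributions to identify every weak limit with the law of $S$.

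First I would recall the Arzel\`a--Ascoli characterization of relatively compact subsets of $C[0,T]$: a set $A$ has compact closure if and only if $\sup_{f\in A}|f(0)|<\infty$ and $\lim_{\delta\downarrow 0}\sup_{f\in A}g(f,\delta)=0$. Fix $\eta>0$. Apply $(\mathcal{C}_1)$ at $t_1=0$: convergence in distribution of $S^n(0)$ to $S(0)$ forces tightness of the real-valued marginals, so there is $M>0$ with $\mathsf{P}\{|S^n(0)|>M\}<\eta/2$ for every $n$. Next, for each integer $k\geq 1$, condition $(\mathcal{C}_2)$ provides $\delta_k>0$ and an index $N_k$ with $\mathsf{P}\{g(S^n,\delta_k)\geq 1/k\}<\eta/2^{k+1}$ whenever $n\geq N_k$; the finitely many earlier indices $n<N_k$ are absorbed by shrinking $\delta_k$ further, using the fact that each single law $\mathsf{P}_n$ on the Polish space $C[0,T]$ is individually tight. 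The Arzel\`a--Ascoli set $K_\eta=\{f\in C[0,T]:|f(0)|\leq M,\;g(f,\delta_k)\leq 1/k\text{ for all }k\geq 1\}$ then has compact closure, and a union bound yields $\mathsf{P}\{S^n\notin\overline{K}_\eta\}<\eta$ uniformly in $n$, establishing tightness of $\{\mathsf{P}_n\}$.

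Second, by Prokhorov's theorem on the Polish space $C[0,T]$, tightness is equivalent to relative compactness in the weak topology, so every subsequence $\{\mathsf{P}_{n'}\}$ contains a further subsequence $\{\mathsf{P}_{n''}\}$ converging weakly to some Borel probability $Q$. Third, applying the continuous mapping theorem to the continuous natural projections $\pi_{t_1,\dots,t_k}$, together with $(\mathcal{C}_1)$, forces the finite-dimensional distributions of $Q$ to coincide with those of $S$; since the Borel $\sigma$-field on $C[0,T]$ is generated by the cylinder sets, this identifies $Q$ with the law of $S$. Because every subsequence has a further subsequence converging to the same limit, the full sequence $\mathsf{P}_n$ converges weakly to the law of $S$, i.e.\ $S^n\Rightarrow_n S$.

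The main obstacle I expect is the tightness step, specifically the careful interplay between the uniform-in-$n$ control of the modulus of continuity supplied by $(\mathcal{C}_2)$ and the finitely many small-$n$ terms that have to be tucked in using the individual tightness of each $\mathsf{P}_n$ on $C[0,T]$. Once this bookkeeping is done, the remainder is a routine application of Prokhorov's theorem and the standard subsequence principle for weak convergence.
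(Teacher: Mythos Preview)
Your argument is correct and is precisely the classical proof one finds in Billingsley: tightness via the Arzel\`a--Ascoli characterization of compact sets in $C[0,T]$, Prokhorov's theorem to pass from tightness to relative compactness, and identification of the limit through finite-dimensional distributions plus the subsequence principle. The bookkeeping you flag about handling the finitely many indices $n<N_k$ by invoking individual tightness of each $\mathsf{P}_n$ on a Polish space is exactly the right way to close that gap.

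However, you should be aware that the paper does \emph{not} give its own proof of this theorem. It is quoted verbatim as \cite[Theorem~7.5]{Billingsley} and used as a black box: the authors then verify conditions $(\mathcal{C}_1)$ and $(\mathcal{C}_2)$ for their particular sequence of trinomial trees. So there is nothing in the paper to compare your proof against beyond the citation itself; what you have written is essentially a reconstruction of Billingsley's proof, and it is sound.
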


On the one hand, we have already shown that the sequence of our trinomial models converges (in distribution) to the geometric Brownian motion. Therefore, condition $(\mathcal{C}_1)$ holds.
On the other hand,  using Definition~\ref{Def:1} and since $\lvert S^n(z)-S^n(z_0)\rvert$ are piece-wise linear, we have
\begin{align*}
	g(S^n,\delta)=\sup\limits_{\lvert z-z_0\rvert\leq \delta}\lvert S^n(z)-S^n(z_0)\rvert=\max\{e^{u_n},e^{m_n},e^{d_n}\}\delta.
\end{align*}

Let $C_n=\{e^{u_n},e^{m_n},e^{d_n}\}$, then $\lim\limits_{n\to\infty}\max \{C_n\}=1$. Thus, for condition $(\mathcal{C}_2)$ in Theorem~\ref{Thm:1}, we have
\begin{align*}
&\lim_{\delta\downarrow 0}\lim_{n\to\infty}\sup \mathsf{P}\left\{g(S^n,\delta)\geq \epsilon\right\}\\
=&\lim_{\delta\downarrow 0}\lim_{n\to\infty}\sup \mathsf{P}\left\{\max(\{C_n\})\delta\geq \epsilon\right\}\\
=&\lim_{\delta\downarrow 0} \mathsf{P}\left\{\delta\geq \epsilon\right\}=0.
\end{align*}
\section{Martingale probability measure}
\label{Martingale probability measure}
In this section, we would like to investigate  if there exists a \emph{martingale (risk-neutral) probability measure} $\mathsf{Q}$ equivalent  to the physical probability measure $\mathsf{P}$ such that the discounted price process $S(t_n)$ becomes a martingale.

On the one hand, using the fundamental asset pricing theorem,   \emph{no arbitrage opportunity} is possible if and only if there exists a risk-neutral probability measure (see~\cite{Kijima2013}). This means that the average return on an asset should be equal to risk-free return, i.e.,
\begin{enumerate}[label=($\mathcal{NA}$),wide=0pt]
	\item $\mathsf{E}^\mathsf{Q}[S(t_i)\vert S(t_{i-1})]=e^{r(t_i-t_{i-1})}S(t_{i-1})=e^{rh}S(t_{i-1})$.	
\end{enumerate}

On the other hand, by definition the martingale conditions for an arbitrary discrete time stochastic process $\{X(t)\}_{t\geq 0}$ are:
\begin{enumerate}[label=($\mathcal{M}_\arabic*$), wide=0pt]
	\item $\mathsf{E}^\mathsf{Q}[\lvert X(t_n)\rvert]<\infty,$
	\item $\mathsf{E}^\mathsf{Q}[X(t_i)\vert X(t_{i-1})]=X(t_{i-1})$.
\end{enumerate}

In order to see under which conditions the \emph{discounted} asset price process $S(t)$ in a risk-neutral world is a martingale, we first define the stochastic process $\{Y_t\}_{t\in[0,T]}$ by $Y(t_n)=\prod_{i=1}^ny_i$ on $w=(\xi_1,\ldots,\xi_n)\in\Omega$ as $y(w)=\xi_i$, where
\begin{align*}
	y_i(w) =
	\begin{cases}
		\;u_0 \qquad \text{if} \quad \xi_i=u,\\
		m_0 \qquad \text{if} \quad \xi_i=m,\\
		\;d_0 \qquad \text{if} \quad \xi_i=d.		
	\end{cases}
\end{align*}
Now, we can write the  stock price ${S(t_n)}$ as
\[
S(t_n)=S_0Y(t_n).
\]

Firstly, for condition ($\mathcal{M}_2$), we have
\begin{align*}
	\mathsf{E}^\mathsf{Q}[S(t_i)\vert S(t_{i-1})]&=\mathsf{E}^\mathsf{Q}[S(t_{i-1})Y(t_{i})\vert S(t_{i-1})]
	=S(t_{i-1})\mathsf{E}^\mathsf{Q}[y_i]\\
	&=S(t_{i-1})(p_{u}{u_0}+p_m{m_0}+p_d{d_0}).
\end{align*}
Equating the above equation with the no arbitrage condition $(\mathcal{NA})$ yields to
\begin{equation}
	\label{Martingale condition}
	e^{rh}=p_{u}{u_0}+p_m{m_0}+p_d{d_0}=p_{u}e^{u}+p_me^{\frac{u+d}{2}}+p_de^{d}.
\end{equation}

Secondly, for condition ($\mathcal{M}_1$) and  since $S(t_n), S_0>0$, we have
\begin{align*}
	\mathsf{E}^\mathsf{Q}[\lvert S(t_n)\rvert]&=\mathsf{E}^\mathsf{Q}[S(t_n)]=\mathsf{E}^\mathsf{Q}[S_0Y(t_n)]=S_0\prod_{i=1}^{n}\mathsf{E}^\mathsf{Q}[y_i]\\
	&=S_0(p_{u}{u_0}+p_m{m_0}+p_d{d_0})^n=S_0(e^{rh})^n=S_0e^{rt_n}.
\end{align*}
Consequently, the expected value of \emph{discounted} risky asset price becomes
\begin{align*}
	\mathsf{E}^\mathsf{Q}[\lvert e^{-rt_n} S(t_n)\rvert]&=e^{-rt_n}S_0e^{rt_n} = S_0<\infty.
\end{align*}

Thus, the discounted stock price is a martingale if Equation~\eqref{Martingale condition}  holds.  Moreover, we note that if $h\to 0$, then both hand sides of Equation~\eqref{Martingale condition} tend to 1. We  investigate this condition in an example given in the next section.

\section{Extension of the results and examples}
\label{Extension of the results and examples}
The proof of convergence suggests us that there might be many solutions to a recombining trinomial tree.

We extend Equation~\eqref{eq:umd} by introducing a finite positive parameter $c>1$, that is
\begin{equation*}
	\begin{cases}
		u = (r-\frac{1}{2}\sigma^2)h + \sigma\sqrt{ch},\\
		m = (r-\frac{1}{2}\sigma^2)h,\\
		d = (r-\frac{1}{2}\sigma^2)h - \sigma\sqrt{ch}.
	\end{cases}
\end{equation*}
Also, we put $p_{u}=p_{d}=1/(2c)$ and $p_{m}=1-2p=1-1/c$.
Note that, with the above general formulations our proof of convergence remains to be valid. Indeed, Equation~\eqref{S(T)_GBM} holds for different values of $c$.  In our trinomial tree, $c=3$, $p_{u}=p_{d}=1/6=1/(2c)$ and $p_{m}=1-2p=1-1/c$. We would like to examine the convergence of Black--Scholes (BS) price and extension of our trinomial (Tri.) model price where $c=\{1,3/2,2,3,4,5,10,20,30\}$.

Let the current stock price $S_0=\$100$, time to maturity $T=1$ year, yearly interest rate $r=3.5\%$ and yearly volatility  $\sigma=30\%$ be given. We set (number of steps in the tree) $n=252$ (business days in 1 year). Now, we calculate the price of European call ($\pi_c$) and put ($\pi_p$) options with the described parameters for strike prices $K=\{\$80,\$100,\$120\}$. We calculate the absolute error (Abs. error) by  absolute value of the difference between analytical Black--Scholes (BS) price and trinomial price (Tri.).  Using MATLAB\textsuperscript{\textregistered}, the results are given in Table~\ref{tab2}, \ref{tab3} and \ref{tab4}.
\begin{table}[t!]
	\centering
	\resizebox{\linewidth}{!}{%
		\begin{tabular}{cccccccc}
			\toprule
			$c$ \;& $p_{u}$ \;& $\pi_c$(\$) BS \;& $\pi_c$(\$) Tri. \;& Abs. error $\pi_c$ \;& $\pi_p$(\$) BS  \;& $\pi_p$(\$) Tri.  \;& Abs. error $\pi_p$  \\
			\midrule
			1   & 1/2  & 25.578 & 25.583 & 0.0050237 & 2.8262 & 2.8315 & 0.0052915 \\
			3/2 & 1/3  & 25.578 & 25.579 & 0.0008575 & 2.8262 & 2.8272 & 0.0010584 \\
			2   & 1/4  & 25.578 & 25.574 & 0.0034202 & 2.8262 & 2.8229 & 0.0032862 \\
			3   & 1/6  & 25.578 & 25.581 & 0.0035653 & 2.8262 & 2.8297 & 0.0035653 \\
			4   & 1/8  & 25.578 & 25.581 & 0.0031566 & 2.8262 & 2.8292 & 0.0030227 \\
			5   & 1/10 & 25.578 & 25.568 & 0.0102440 & 2.8262 & 2.8157 & 0.0105120 \\
			10  & 1/20 & 25.578 & 25.585 & 0.0071857 & 2.8262 & 2.8324 & 0.0062482 \\
			20  & 1/40 & 25.578 & 25.591 & 0.0134450 & 2.8262 & 2.8374 & 0.0111680 \\
			30  & 1/60 & 25.578 & 25.511 & 0.0662680 & 2.8262 & 2.7563 & 0.0698850 \\
			\bottomrule
		\end{tabular}
	}
		\caption{$K=\$80$.}
		\label{tab2}
\end{table}

\begin{table}[t!]
	\centering
	\resizebox{\linewidth}{!}{%
		\begin{tabular}{cccccccc}
		\toprule
		$c$ \;& $p_{u}$ \;& $\pi_c$(\$) BS \;& $\pi_c$(\$) Tri. \;& Abs. error $\pi_c$ \;& $\pi_p$(\$) BS  \;& $\pi_p$(\$) Tri.  \;& Abs. error $\pi_p$  \\
		\midrule
		1   & 1/2  & 13.517 & 13.523 & 0.0058724 & 10.078 & 10.084 & 0.00614020 \\
		3/2 & 1/3  & 13.517 & 13.522 & 0.0051641 & 10.078 & 10.083 & 0.00536490 \\
		2   & 1/4  & 13.517 & 13.522 & 0.0047397 & 10.078 & 10.083 & 0.00487370 \\
		3   & 1/6  & 13.517 & 13.520 & 0.0031506 & 10.078 & 10.081 & 0.00315060 \\
		4   & 1/8  & 13.517 & 13.518 & 0.0009543 & 10.078 & 10.079 & 0.00082035 \\
		5   & 1/10 & 13.517 & 13.516 & 0.0016326 & 10.078 & 10.076 & 0.00190050 \\
		10  & 1/20 & 13.517 & 13.500 & 0.0177320 & 10.078 & 10.059 & 0.01867000 \\
		20  & 1/40 & 13.517 & 13.460 & 0.0570400 & 10.078 & 10.018 & 0.05931700 \\
		30  & 1/60 & 13.517 & 13.416 & 0.1008700 & 10.078 & 9.9733 & 0.10449000 \\
		\bottomrule
		\end{tabular}
	}
		\caption{$K=\$100$.}
		\label{tab3}
\end{table}

\begin{table}[t!]
	\centering
	\resizebox{\linewidth}{!}{%
	\begin{tabular}{cccccccc}
		\toprule
		$c$ \;& $p_{u}$ \;& $\pi_c$(\$) BS \;& $\pi_c$(\$) Tri. \;& Abs. error $\pi_c$ \;& $\pi_p$(\$) BS  \;& $\pi_p$(\$) Tri.  \;& Abs. error $\pi_p$  \\
		\midrule
		1   & 1/2  & 6.4401 & 6.4333 & 0.006796 & 22.313 & 22.306 & 0.00652820 \\
		3/2 & 1/3  & 6.4401 & 6.4424 & 0.002330 & 22.313 & 22.315 & 0.00253110 \\
		2   & 1/4  & 6.4401 & 6.4401 & 0.000055 & 22.313 & 22.313 & 0.00018848 \\
		3   & 1/6  & 6.4401 & 6.4363 & 0.003782 & 22.313 & 22.309 & 0.00378220 \\
		4   & 1/8  & 6.4401 & 6.4317 & 0.008373 & 22.313 & 22.304 & 0.00850690 \\
		5   & 1/10 & 6.4401 & 6.4481 & 0.008076 & 22.313 & 22.321 & 0.00780800 \\
		10  & 1/20 & 6.4401 & 6.4366 & 0.003491 & 22.313 & 22.308 & 0.00442890 \\
		20  & 1/40 & 6.4401 & 6.4442 & 0.004160 & 22.313 & 22.315 & 0.00188270 \\
		30  & 1/60 & 6.4401 & 6.3995 & 0.040562 & 22.313 & 22.269 & 0.04417800 \\
		\bottomrule
	\end{tabular}
	}
	\caption{$K=\$120$.}
	\label{tab4}
\end{table}

Moreover, we investigate if the martingale condition holds for this example. This means, the following absolute value of difference (call it martingale condition) as $n\to\infty$ or equivalently $h\to 0$ should tend to zero.
\[
\text{Martingale condition}=\lvert(p_{u}u_0+p_mm_0+p_dd_0)-e^{rh}\rvert\xrightarrow[h\to 0]{}0.
\]
The martingale conditions for different values of $c$  are illustrated in Table~\ref{tab5}.
\begin{remark}
We note that, when $c=1$, our trinomial model reduces to a binomial model where $p_u=p_d=1/2$ (see the dashed lines in Figure~\ref{fig}). In the limit case, i.e., when $h\to 0$, our reduced trinomial model matches the Jarrow--Rudd binomial model proposed in~\cite{Jarrow-Rudd} (see also,~\cite{Jarrow}). 	
\end{remark}

\begin{table}[t!]
	\centering
	\begin{tabular}{ccc}
		\toprule
		$c$ &\qquad $p_{u}$ \qquad&  Martingale condition  \\
		\midrule
		1   &\qquad 1/2  \qquad&  $1.0630\times 10^{-8}$  \\
		3/2 &\qquad 1/3  \qquad&  $7.9724\times 10^{-9}$  \\
		2   &\qquad 1/4  \qquad&  $5.3151\times 10^{-9}$  \\
		3   &\qquad 1/6  \qquad&  $3.7947\times 10^{-13}$  \\
		4   &\qquad 1/8  \qquad&  $5.3145\times 10^{-9}$  \\
		5   &\qquad 1/10 \qquad&  $1.0629\times 10^{-8}$  \\
		10  &\qquad 1/20 \qquad&  $3.7206\times 10^{-8}$  \\
		20  &\qquad 1/40 \qquad&  $9.0369\times 10^{-8}$  \\
		30  &\qquad 1/60 \qquad&  $1.4355\times 10^{-7}$  \\
		\bottomrule
		\end{tabular}
		\caption{Martingale condition.}
		\label{tab5}
\end{table}

\subsection*{A trinomial tree approximation for Black's model}

So far, we have considered the Black--Scholes model. Constructing trinomial model and a recombining tree in the Black's model is similar. Proofs of convergence to the geometric Brownian motion and martingale conditions can be achieved  similar to what we have done in previous sections. We only have different up, middle and down factors. That is, using Equation~\eqref{modified stratonovich for forward rate}
\begin{equation*}
	\begin{cases}
		u =\ln(u_0)=\ln \dfrac{F_3(h)}{F(0)}=-\dfrac{1}{2}\sigma^2h + \sigma\sqrt{3h},\\
		m = \ln(m_0)=\ln\dfrac{F_2(h)}{F(0)} =-\dfrac{1}{2}\sigma^2h,\\
		d = \ln(d_0)=\ln\dfrac{F_1(h)}{F(0)}=-\dfrac{1}{2}\sigma^2h - \sigma\sqrt{3h}.
	\end{cases}
\end{equation*}
Now, we try an example inspired by~\cite{Hull2017}. Assume that the current future price of a commodity  $F_0=\$100$, time to maturity $T=0.5$ of a year, strike price $K=\$120$, yearly interest rate $r=2.5\%$, yearly volatility  $\sigma=25\%$ and the number of steps in our trinomial tree $n=252$.  Programming in MATLAB\textsuperscript{\textregistered}, we find the prices of European call futures and put futures options  given in Table~\ref{tab:example2}. Moreover, the behaviour of trinomial trees (for put futures option) as the number of steps increases is depicted in Figure~\ref{fig:example2}.
\begin{figure}[t!]
	\centering
	\includegraphics[width=.8\linewidth]{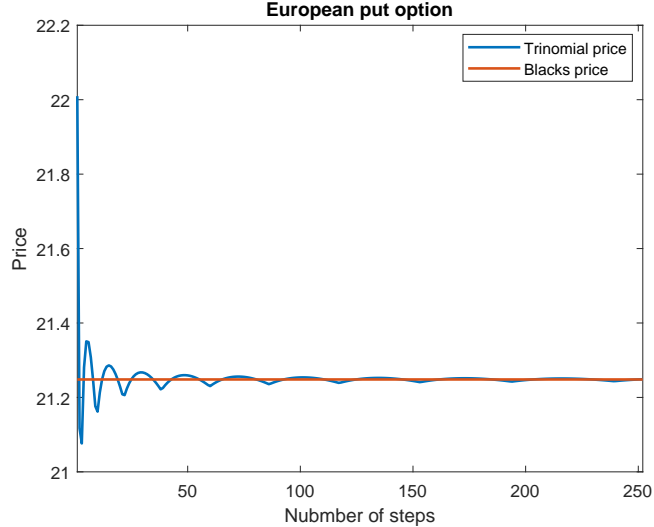}
	\caption{Black's and trinomial prices for European  put futures option.}
	\label{fig:example2}
\end{figure}

\begin{table}[t!]
	\centering
		\begin{tabular}{cccc}
			\toprule
			$\pi_c$(\$) Black \;& $\pi_c$(\$) Trinomial \;& $\pi_p$(\$) Black \;& $\pi_p$(\$) Trinomial    \\
			\midrule
			 1.496683230   & 1.497311844  & 21.248239239 & 21.248867854 \\		
			\bottomrule
		\end{tabular}
		\caption{Black's model example.}
		\label{tab:example2}
\end{table}
The  absolute value of difference between  Black's and trinomial prices for call futures option is 0.0006286140 and for put is 0.0006286144.
\subsection{Pricing American options using the trinomial tree}
In this part, we will investigate the performance of our model in pricing American call and put options and compare it with classical CRR model. Let the current stock price $S_0=\$100$, time to maturity $T=0.5$ year, yearly interest rate $r=2.5\%$ and yearly volatility  $\sigma=25\%$ be given. We increase the number of steps in the tree up to $n=126$ (business days in half a year). We would like to calculate the price of American call and put  options with the described parameters for strike prices $K=\{\$90,\$100,\$110\}$. The stock is non-dividend paying, thus the optimal time to exercise such an American call option is at maturity and therefore its price should coincide with a European call option. This fact help us to calculate the Black--Scholes (BS) price of European call option and calculate absolute errors of our trinomial model and CRR model for a call option.

Using MATLAB\textsuperscript{\textregistered}, we implement \emph{backward recursion} to preform these calculations. The results are given in Figure~\ref{fig:AmExK90}, \ref{fig:AmExK100}, \ref{fig:AmExK110} and \ref{fig:abserror}.
\begin{figure}[t!]
	\begin{subfigure}{.5\linewidth}
		\centering
		\includegraphics[width=\linewidth]{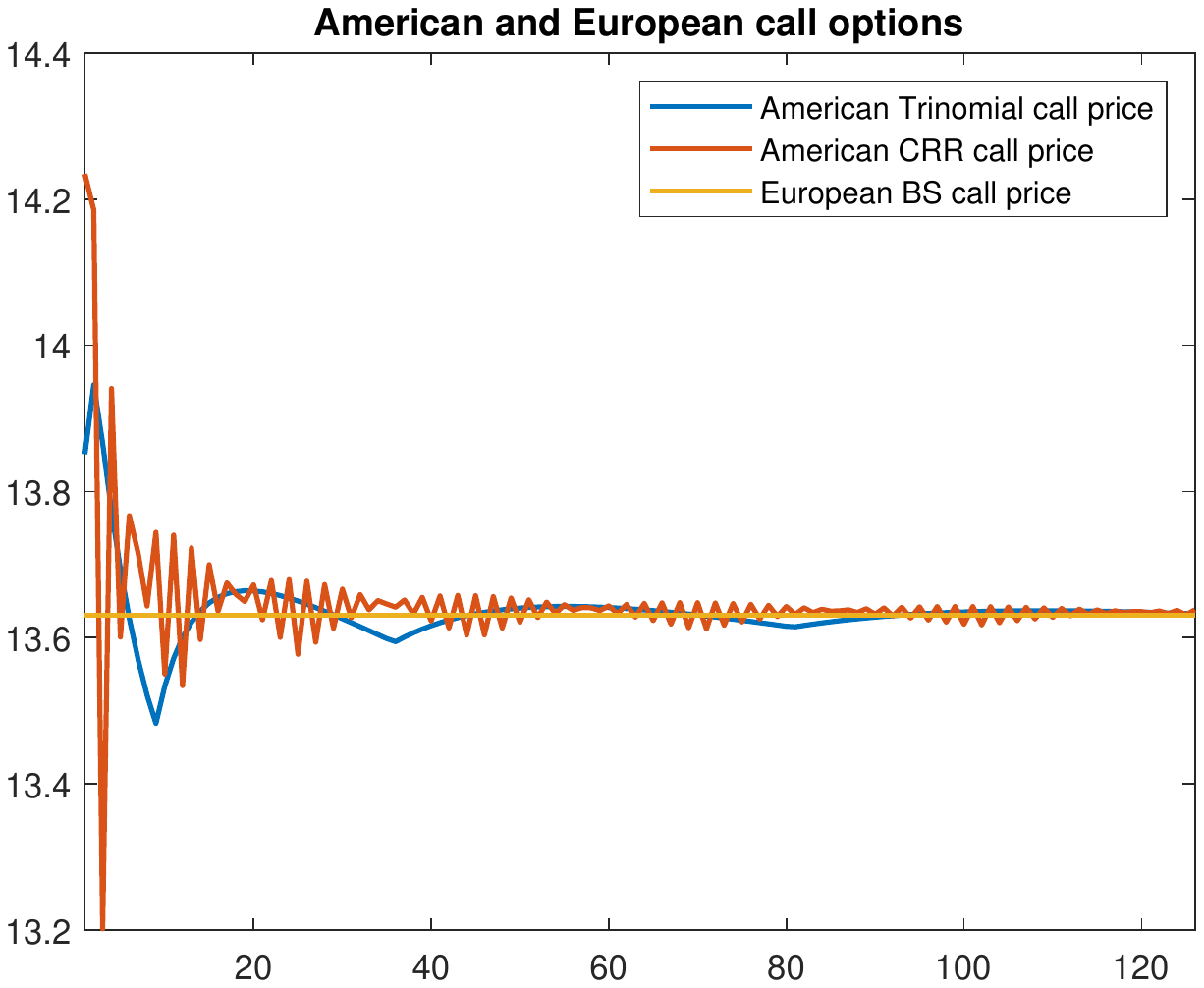}
	\end{subfigure}
	\begin{subfigure}{.5\linewidth}
		\centering
		\includegraphics[width=\linewidth]{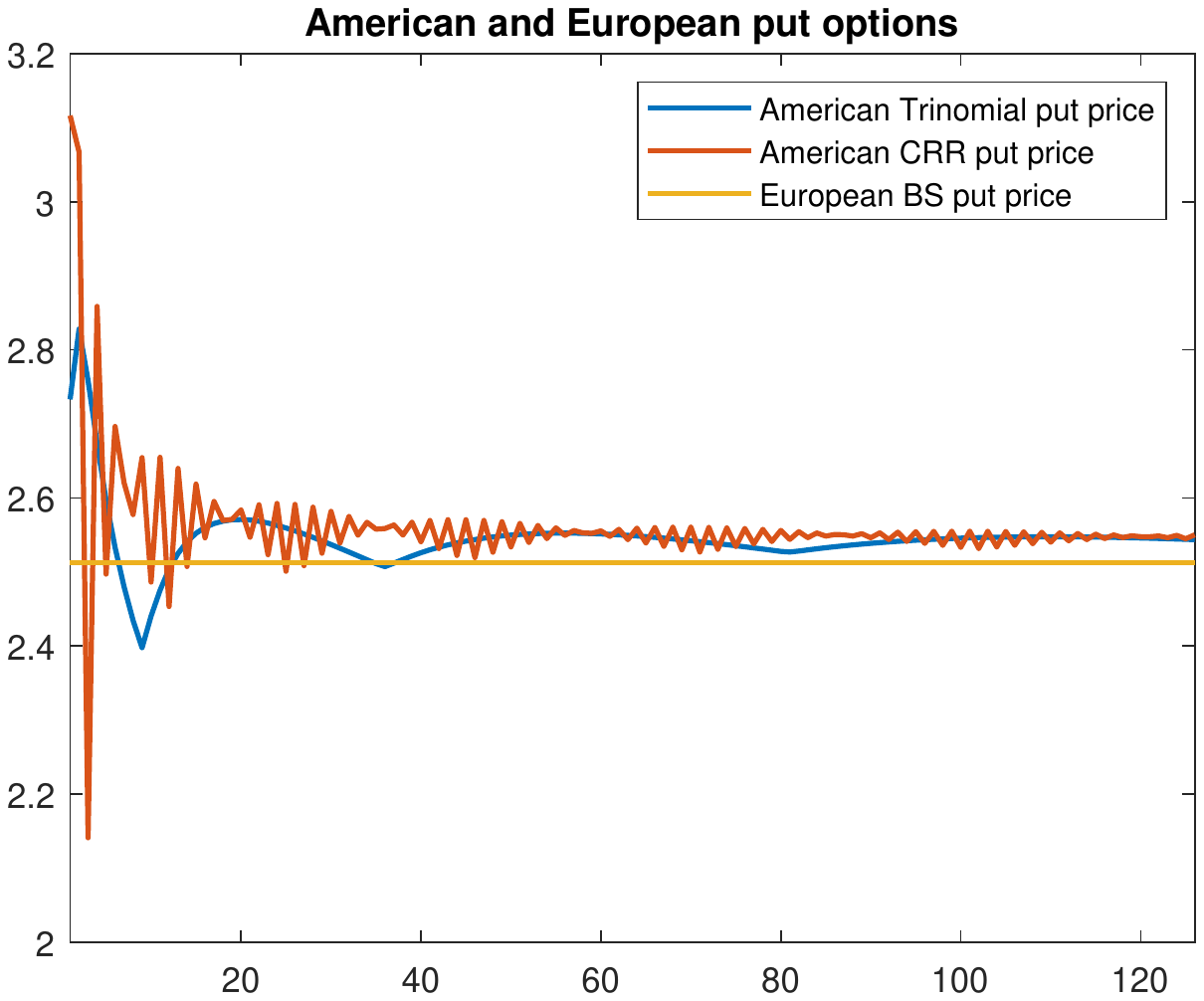}
	\end{subfigure}
	\caption{Trinomial and CRR prices for American call and put options, $K=90$.}
	\label{fig:AmExK90}
\end{figure}
\begin{figure}[t!]
	\begin{subfigure}{.5\linewidth}
		\centering
		\includegraphics[width=\linewidth]{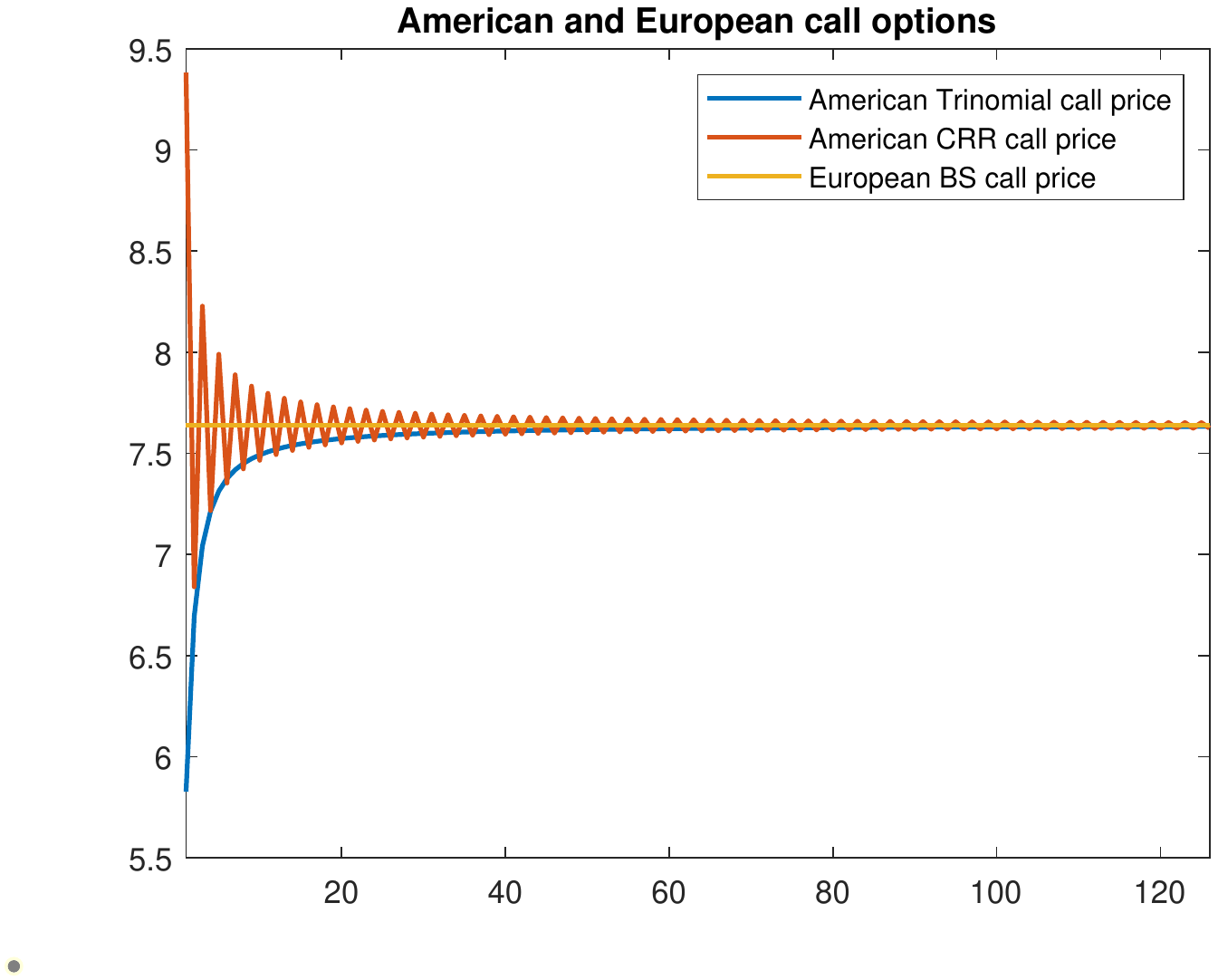}
	\end{subfigure}
	\begin{subfigure}{.5\linewidth}
		\centering
		\includegraphics[width=\linewidth]{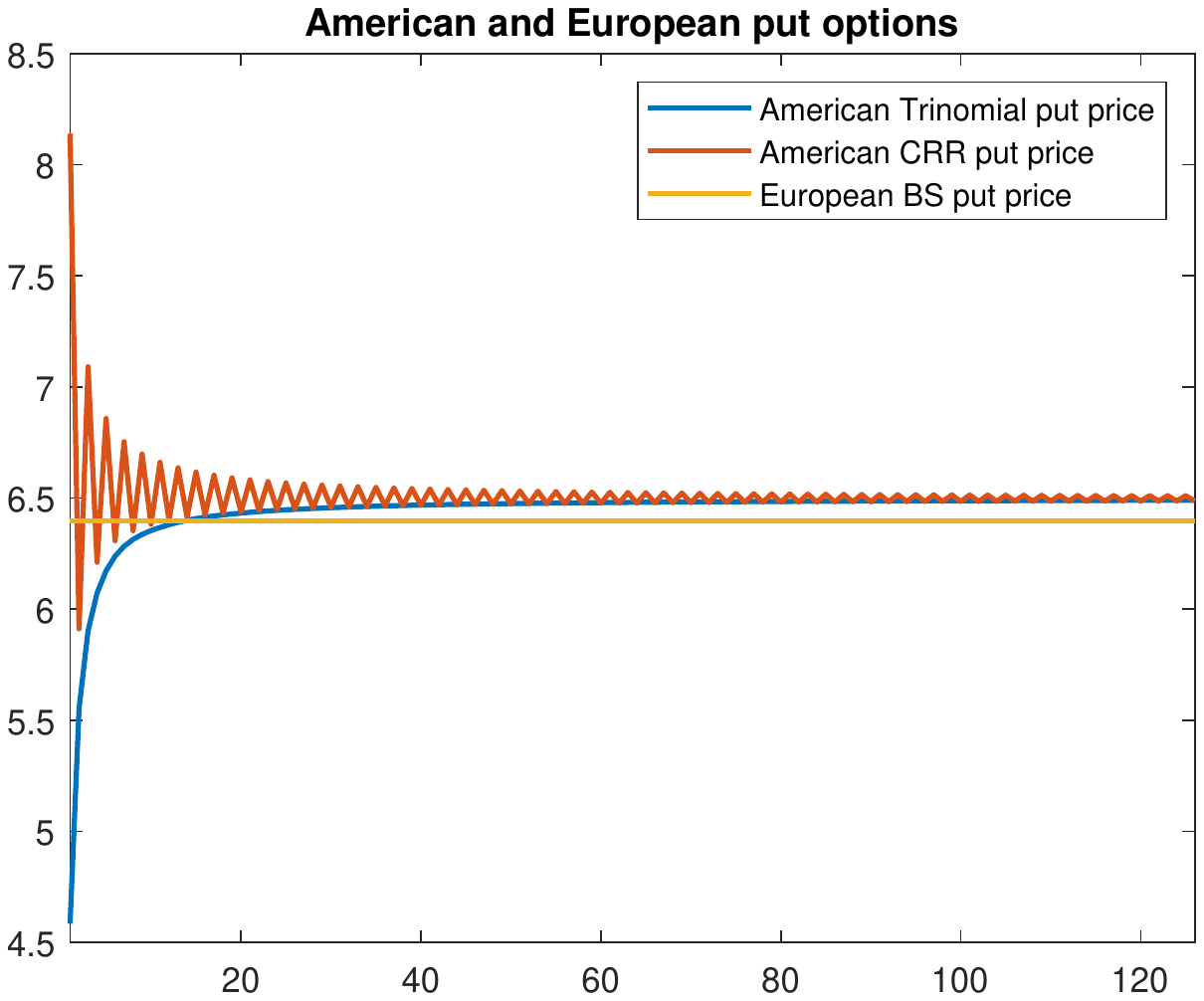}
	\end{subfigure}
	\caption{Trinomial and CRR prices for American call and put options, $K=100$.}
	\label{fig:AmExK100}
\end{figure}
\begin{figure}[t!]
	\begin{subfigure}{.5\linewidth}
		\centering
		\includegraphics[width=\linewidth]{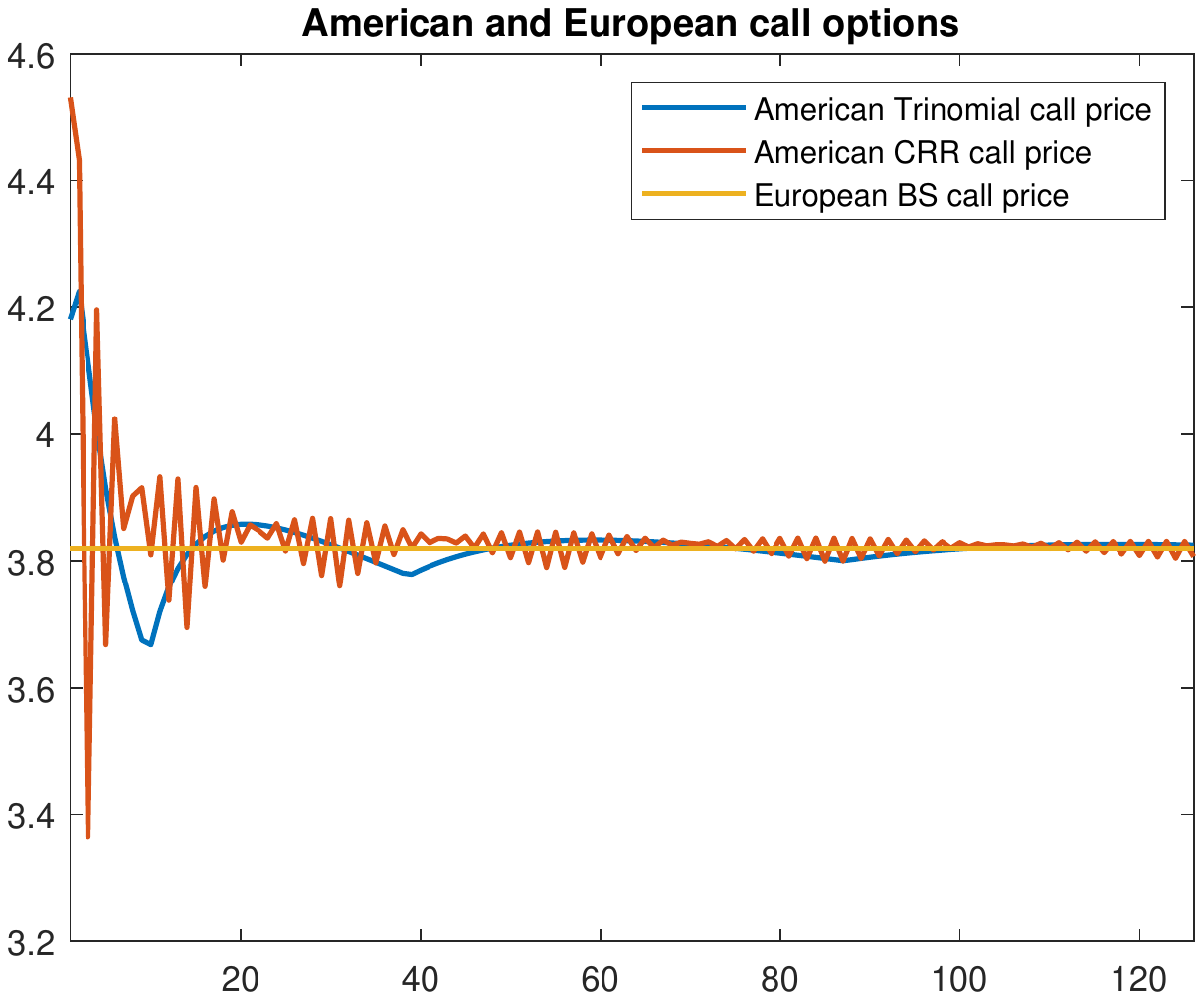}
	\end{subfigure}
	\begin{subfigure}{.5\linewidth}
		\centering
		\includegraphics[width=\linewidth]{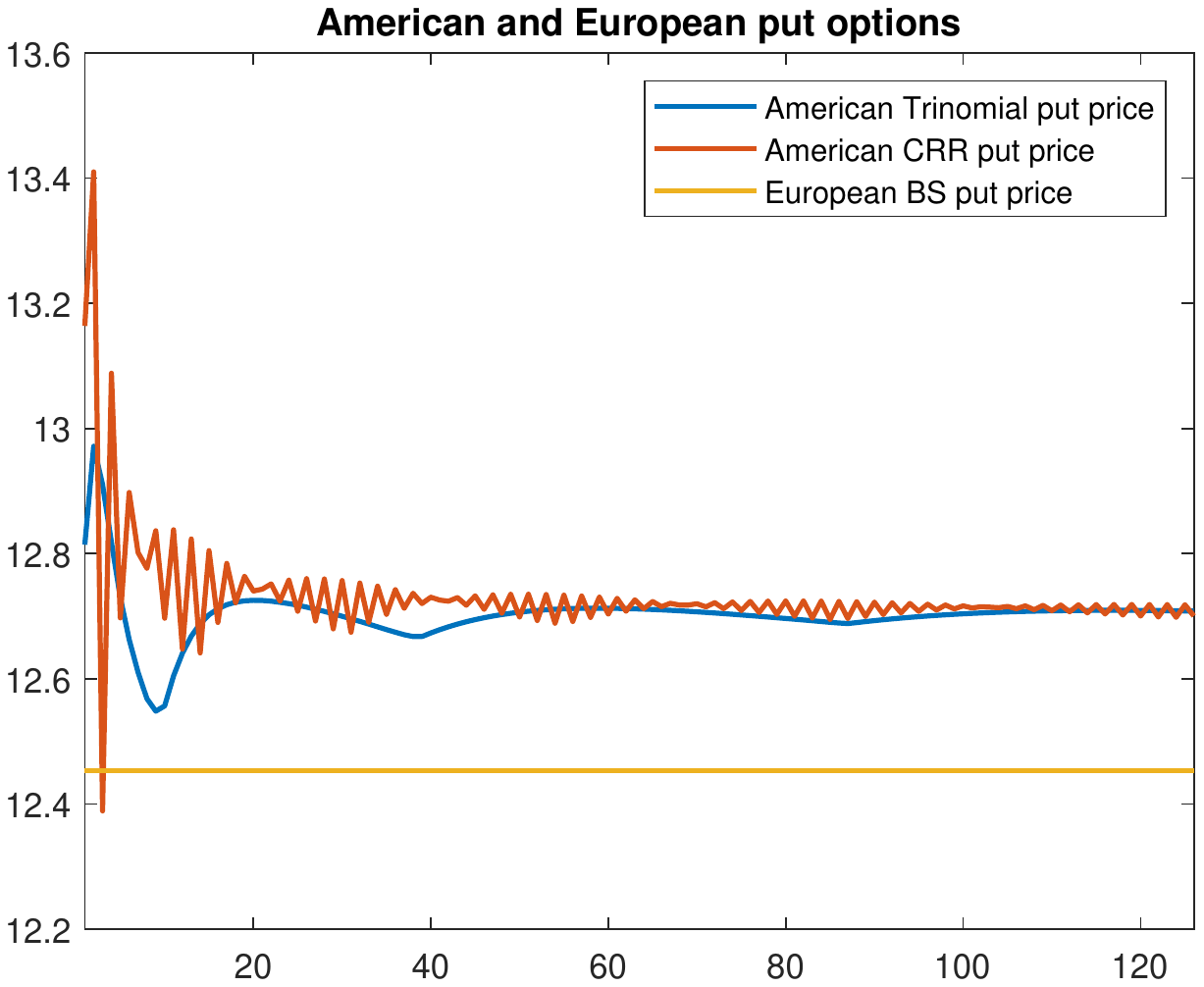}
	\end{subfigure}
	\caption{Trinomial and CRR prices for American call and put options, $K=110$.}
	\label{fig:AmExK110}
\end{figure}

\begin{figure}[t!]
	\begin{subfigure}{.5\linewidth}
		\centering
		\includegraphics[width=\linewidth]{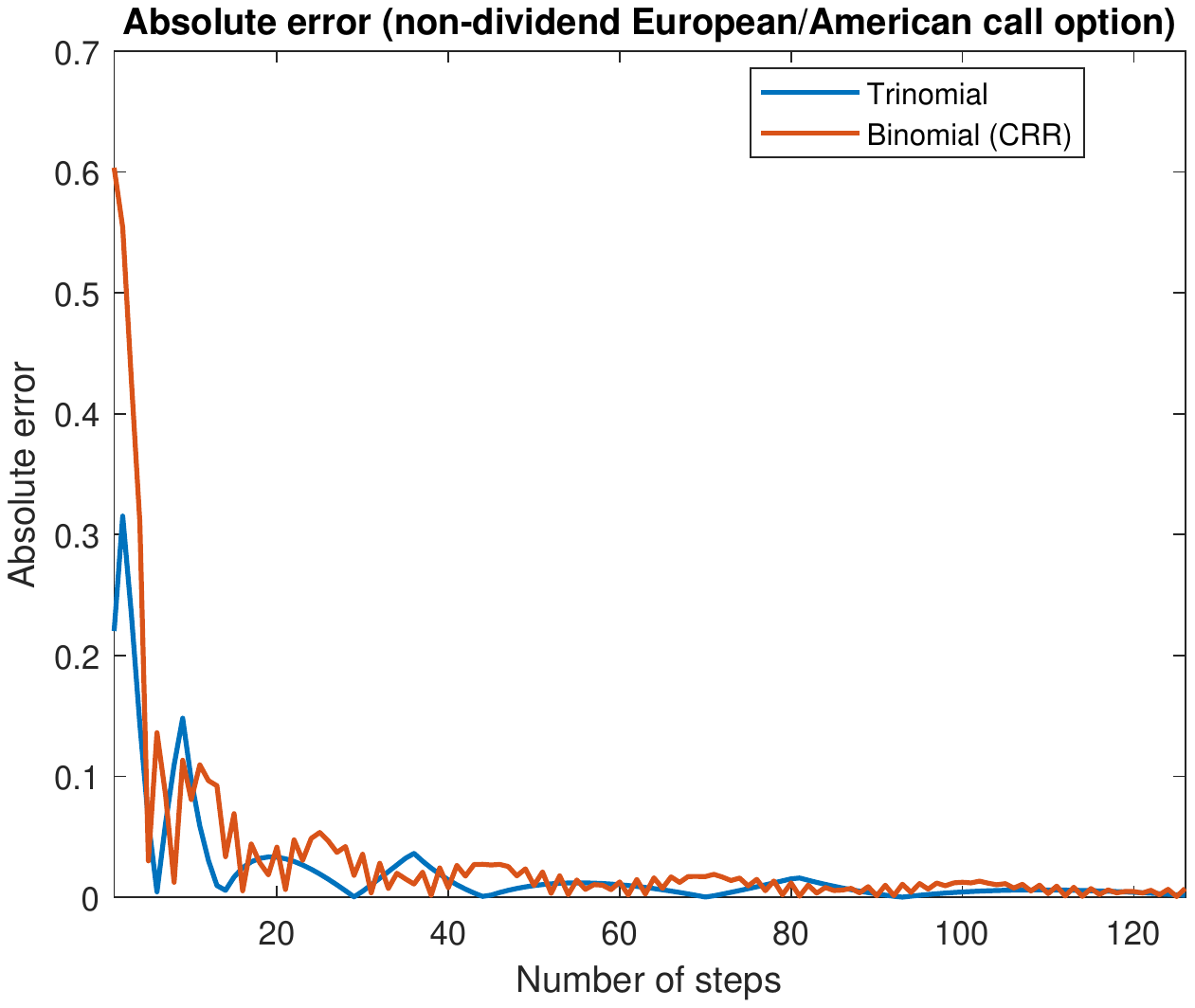}
		\caption{$K=90$.}
	\end{subfigure}
	\begin{subfigure}{.5\linewidth}
		\centering
		\includegraphics[width=\linewidth]{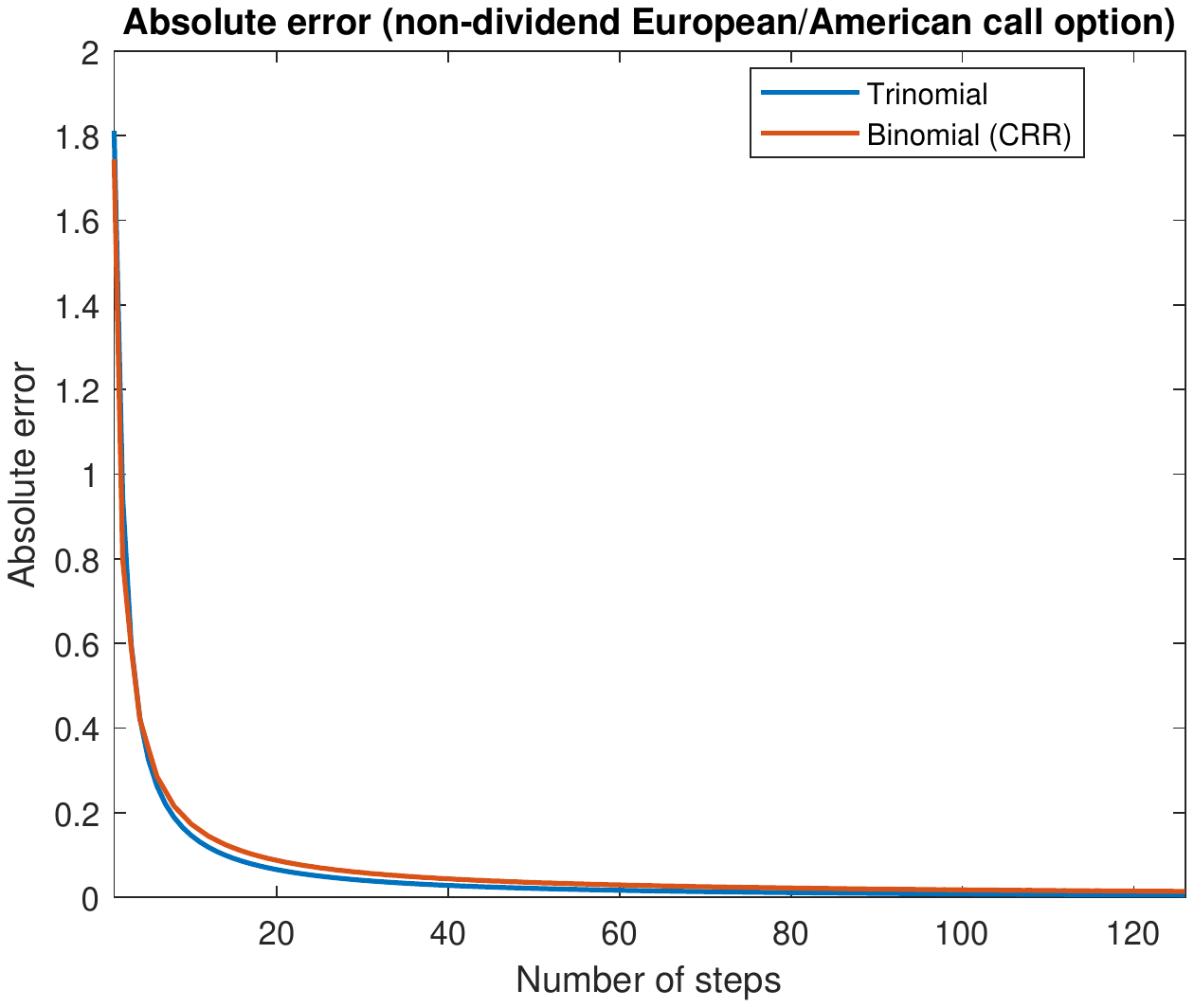}
	\caption{$K=100$.}
	\end{subfigure}
	\caption{Absolute errors of trinomial and CRR prices vs Black--Scholes price, for non-dividend paying European/American call options, $K=\{90,100\}$.}
	\label{fig:abserror}
\end{figure}
\section{Discussion}
\label{Discussion and future works}
In this paper, we briefly reviewed the  cubature method on Wiener space where we specifically applied cubature method and cubature formula on Black--Scholes and Black's models. We saw that using cubature formula of degree 5, solving Black--Scholes and Black's SDEs reduces to solving 3 ordinary differential equations. This approach is not very accurate for long time intervals and therefore we constructed a trinomial tree model for very small time intervals using the result of cubature formula. Then, we find the numerical prices of European call and put options using our developed cubature formula and compare our results with analytical prices of Black--Scholes model. Moreover, we proved that the sequences of constructed trinomial tree converges to the geometric Brownian motion. After that, we studied the  martingale conditions and we extended the results. The extension of results (among other possible applications) included pricing American  options.


\end{document}